\newtheorem{theorem}{Theorem}
\newtheorem{lemma}{Lemma}
\theoremstyle{definition}
\newtheorem{dfn}{Definition}
\newcommand{\ie}{{\em i.e., }}
\newcommand{\eg}{{\em e.g., }}
\newcommand{\calA}{\mathcal{A}}
\newcommand{\calL}{\mathcal{L}}
\newcommand{\calM}{\mathcal{M}}
\newcommand{\pal}{\calA_{svl}}
\newcommand{\slot}{\mathtt{slot}}
\newcommand{\id}{\mathtt{ID}}
\newcommand{\groupid}{\mathtt{group}}
\newcommand{\leaderid}{\mathtt{leader}}
\newcommand{\level}{\mathtt{lv}}
\newcommand{\mode}{\mathtt{mode}}
\newcommand{\inport}{\mathtt{port}}
\newcommand{\last}{\mathtt{last}}
\newcommand{\pin}{\mathtt{pin}}
\newcommand{\pout}{\mathtt{pout}}
\newcommand{\leader}{L}
\newcommand{\zombie}{Z}
\newcommand{\board}{S}
\newcommand{\settled}{\mathit{settled}}
\newcommand{\unsettled}{\mathit{unsettled}}
\newcommand{\maxdegree}{\delta_{\mathrm{max}}}
\newcommand{\lmin}{\calL_{\mathrm{min}}}
\newcommand{\vlevel}{\mathit{VL}}
\newcommand{\idmax}{\mathit{id}_{\mathrm{max}}}
\newcommand{\aset}{R}
\newcommand{\lset}{\aset_{\leader}}
\newcommand{\zset}{\aset_{\zombie}}
\newcommand{\bset}{\aset_{\board}}
\newcommand{\sinit}{s_{\mathrm{init}}}
\newcommand{\nextc}{\mathit{next}}
\newcommand{\asetmax}{\aset_{\mathrm{max}}}
\newcommand{\amax}{a_{\mathrm{max}}}
\newcommand{\vst}{v_{\mathrm{st}}}
\begin{document}
\title{
Efficient Dispersion of Mobile Agents without Global Knowledge
\thanks{
This work was supported by JSPS KAKENHI Grant Numbers 19H04085, 19K11826, and 20H04140 and JST SICORP Grant Number JPMJSC1606.
}
}

\author[1]{Takahiro Shintaku}
\author[1]{Yuichi Sudo\thanks{Corresponding author:y-sudou[at]ist.osaka-u.ac.jp}}
\author[2]{Hirotsugu Kakugawa}
\author[1]{Toshimitsu Masuzawa}

\affil[1]{Osaka University, Japan}
\affil[2]{Ryukoku University, Japan}

\date{}

\maketitle              

\begin{abstract}
We consider the dispersion problem for mobile agents.
Initially, $k$ agents are located at arbitrary nodes
in an undirected graph.
Agents can migrate from node to node via an edge in the graph synchronously. Our goal is to let the $k$ agents be located at different $k$ nodes with minimizing the number of steps before dispersion is completed and the working memory space used by the agents.  Kshemkalyani and Ali [ICDCN, 2019] present a fast and space-efficient dispersion algorithm with the assumption that each agent has global knowledge such as the number of edges and the maximum degree of a graph. In this paper, we present a dispersion algorithm that does not require such global knowledge but keeps the asymptotically same running time and slightly smaller memory space.
\end{abstract}

\section{Introduction}
\label{sec:introduction}

We consider the dispersion problem of mobile robots,
which we call mobile agents (or just \emph{agents})
in this paper.
At the beginning of an execution, 
$k$ agents are arbitrarily placed 
in an undirected graph at the beginning of an execution.
The goal of this problem
is to let all agents 
be located at different nodes. 
This problem was originally
formulated by Augustine and Moses Jr.~\cite{AM18} in 2018.
The most interesting point of this problem is 
the uniqueness of the computation model. 
Unlike many problems of mobile agents on graphs,
we cannot access the identifiers of the nodes 
and cannot use a local memory at each node,
usually called whiteboard.
In this setting, an agent cannot get/store any information
from/on a node when it visits the node. 
Instead, $k$ agents have unique identifiers
and can communicate with each other
when they visit the same node in a graph.
The agents must coordinately solve a common task
via direct communication with each other. 

\begin{table}[t]
\caption{Dispersion Algorithm for Arbitrary Undirected Graphs. 
($m'=\min(m,k\maxdegree/2,\binom{k}{2}$)}
\label{tbl:results}
\centering
\begin{tabular}{c c c c c}
\hline 
& \ \ \ Memory Space\ \ \ &\ \ \  Running Time\ \ \ &\ \ \ Knowledge\\
\hline
\cite{AM18}  & $O(k \log (\maxdegree+k))$ & $O(m')$ &\\
\cite{KA19} & $O(\ell \log (\maxdegree+k))$ bits & $O(m')$ steps &\\
\cite{KA19} & $O(d \log \maxdegree)$ bits & $O(\maxdegree^d)$ steps &\\
\cite{KA19}  & $O(\log (\maxdegree+k))$ bits & $O(m'\ell)$ steps &\\
\cite{KMS18} & $O(\log n)$ bits & $O(m'\log \ell)$ steps & $m$, $k$, $\maxdegree$\\
this work \ \ &  $O(\log (\maxdegree+k))$ bits & $O(m' \log \ell)$ steps & \\
\hline
\end{tabular}
\end{table}

Several algorithms were presented for the dispersion problem
of mobile agents in the literature.
Augustine and Moses Jr.~\cite{AM18} gave an algorithm 
that achieves dispersion within $O(m')$ steps,
where $m$ and $\maxdegree$ are the number of edges 
and the maximum degree of a graph, respectively,
and $m'=\min(m,k\maxdegree/2,\binom{k}{2})$.
In this algorithm, each agent uses $O(k \log (\maxdegree + k ))$ bits
of memory space. 
Kshemkalyani and Ali~\cite{KA19} presents
three dispersion algorithms.
The first one slightly decreases the memory space per agent
from $O(k \log (\maxdegree + k ))$ bits
to $O(\ell \log (\maxdegree + k ))$ bits,
where $\ell$ is the number of nodes at which
at least one agent is located at the beginning of an execution.
The second one uses $O(d \log \maxdegree)$ bits per agent
and solves the dispersion problem within $O(\maxdegree^d)$ steps,
where $d$ is the diameter of a graph.
The third one achieves dispersion
with much smaller space: $O(\log(\maxdegree + k))$ bits per agent.
However, it requires $O(m'\ell)$ time steps
before dispersion is achieved. 
The current state of art algorithm was given
by Kshemkalyani, Molla, and Sharma~\cite{KMS18}.
This algorithm is both time and space efficient.
The running time is $O(m' \log \ell)$ steps
and the memory space used by each agent
is $O(\log n)$ bits,
where $n$ is the number of nodes in a graph.
However, this algorithm requires global knowledge,
\ie $m$, $k$, and $\maxdegree$,
as mentioned in their paper.~\cite{KMS18}.
To the best of our understanding, 
their algorithm does not necessarily require the \emph{exact} values
of those parameters.
It requires only upper bounds
$M$, $K$, and $\Delta$ on $m$, $k$, and $\maxdegree$,
respectively.
Then, the assumption becomes much weaker,
but time and space complexities may increase
depending on how large those upper bounds are.
Indeed, given those upper bounds,
the algorithm achieves dispersion
within $O(M'\log \ell)$ steps
and uses $O(\log M')$ bits of memory space,
where $M=\min(M,K\Delta,K^2)$.

\subsection{Our Contribution}
The main contribution of this paper
is removing the requirement of global knowledge
of the algorithm given by \cite{KMS18}.
Specifically, we gave a dispersion algorithm
whose running time is $O(m' \log \ell)$ steps
and uses $O(\log (\maxdegree + k))$ bits of
the memory space of each agent. 
The proposed algorithm does not require
any global knowledge such as $m$, $k$, and $\maxdegree$.
In addition, the space complexity is slightly smaller
than the algorithm given by \cite{KMS18},
while the running times of both algorithms are asymptotically the same.

As with the existing algorithms listed in Table \ref{tbl:results},
the proposed algorithm works on an arbitrary simple, connected,
and undirected graph. 
This algorithm solves the dispersion problem
regardless of the initial locations of the agent in a graph.
We require that the agents are synchronous,
as in the algorithm given by \cite{KMS18}.

\subsection{Other Related Work}
The dispersion problem has been studied
not only for arbitrary undirected graphs,
but for graphs of restricted topology. 
Augustine and Moses Jr.~\cite{AM18}
addressed this problem for paths, rings, and trees.
Kshemkalyani, Molla, and Sharma~\cite{KMS18}
studied the dispersion problem also in grid networks. 
Very recently, the same authors introduced 
\emph{the global communication model}~\cite{KMS20}.
Unlike the above setting, all agents can always communicate 
each other regardless of their current locations.
They studied the dispersion problem
for arbitrary graphs and trees under this communication model.

The exploration problem of mobile agents
is closely related to the dispersion problem.
This problem requires that each node (or each edge)
of a graph is visited at least once by an agent. 
If the unique node-identifiers are available,
a single agent can easily visit all nodes within $2m$ steps
in a simple depth first search traversal.
Panaite and Pelc \cite{PP99} gave a faster algorithm,
whose cover time is $m + 3n$ steps. 
Their algorithm uses $O(m \log n)$ bits in the agent-memory,
while it does not use whiteboards, \ie local memories of the nodes.
Sudo, Baba, Nakamura, Ooshita, Kakugawa, and Masuzawa \cite{SBN15}
gave another implementation of this algorithm:
they removed the assumption of the unique identifiers
and reduced the space complexity on the agent-memory
from $O(m\log n)$ bits to $O(n)$ bits by using $O(n)$ bits in
each whiteboard.
The algorithm given by Priezzhev, Dhar, Dhar, and Krishnamurthy \cite{PDD+96}, which is well known as the \emph{rotor-router},
also solves the exploration problem efficiently.
The agent uses $O(\log \delta_v)$ bits
in the whiteboard of each node $v \in V$
and the agent itself is oblivious,
\ie it does not use its memory space at all.
The rotor-router algorithm is self-stabilizing,
\ie it guarantees that
starting from any (possibly corrupted) configuration, 
the agent visits all nodes within $O(mD)$ steps \cite{YWI+03}.

\subsection{Organization}
In Section \ref{sec:pre},
we define the model of computation and 
the problem specification.
In Section \ref{sec:existing},
we briefly explain the existing techniques used for the dispersion problem in the literature. This section may help the readers
to clarify what difficulties we addresses 
and how novel techniques we introduce to design the proposed protocol
in this paper.
In Section \ref{sec:svl}, 
we present the proposed protocol.
In section \ref{sec:conclusion},
we conclude this paper with short discussion for an open problem.

\section{Preliminaries}
\label{sec:pre}
Let $G=(V,E)$ be any simple, undirected, and connected graph.
Define $n=|V|$ and $m=|E|$. 
We define the degree of a node $v$ as $\delta_v = |\{u \in V \mid (u,v) \in E\}|$. Define $\maxdegree = \max_{v \in V} \delta_v$,
\ie $\maxdegree$ is the maximum degree of $G$.
The nodes are anonymous, \ie they do not have unique identifiers. However, the edges incident to a node $v$ 
are locally labeled at $v$
so that a robot located at $v$ can distinguish those edges.
Specifically, those edges have distinct labels
$0,1,\dots,\delta_v-1$ at node $v$.
We call this local labels \emph{port numbers}. 
We denote the port number assigned at $v$ for edge $\{v,u\}$
by $p_v(u)$.
Each edge $\{v,u\}$ has two endpoints, thus has labels $p_{u}(v)$ and $p_{v}(u)$.
Note that these two labels are not necessarily the same, \ie $p_{u}(v) \neq p_{v}(u)$ may hold.
We say that an agent moves \emph{via port $p$} from a node $v$
when the agent moves from $v$ to the node $u$ 
such that $p_{v}(u) = p$.

We consider that $k$ agents exist in graph $G$, where $k \le n$.
The set of all agents is denoted by
$\aset$.
Each agent is always located at some node in $G$,
\ie the move of an agent is \emph{atomic}
and an agent is never located at an edge at any time step
(or just \emph{step}).
The agents have unique identifiers,
\ie each agent $a$ has a positive identifier $a.\id$
such that $a.\id \neq b.\id$
for any $b \in \aset \setminus \{a\}$.
The agents knows a common upper bound $\idmax = O(poly(k))$
such that $\idmax \ge \max_{a \in \aset}a.\id$,
thus the agents can store the identifier of any agent
on $O(\log k)$ space.
Each agent has a read-only variable $a.\pin \in \{-1,0,1,\dots,\delta_v-1\}$.
At time step $0$, $a.\pin = -1$ holds.
For any $t \ge 1$,
If $a$ moves from $u$ to $v$ at step $t-1$,
$a.\pin$ is set to $p_v(u)$
at the beginning of step $t$.
If $a$ does not move at step $t-1$,
$a.\pin$ is set to $-1$.

The agents are synchronous.
All $k$ agents are given a common algorithm $\calA$.
Let $\aset(v,t) \subseteq \aset$ be the set of the agents located 
at a node $v$.
We define $\ell = |\{v \in V \mid \aset(v,0) \ge 1\}|$,
\ie $\ell$ is the number of nodes with at least one agent in step $0$.
At each step $t \ge 0$,
the agents in $\aset(v,t)$ first communicate with each other
and agrees how each agent $a \in \aset(v,t)$
updates the variables in its memory space in step $t$,
including a variable
$a.\pout \in \{-1,0,1,\dots,\delta_v-1\}$,
according to algorithm $\calA$.
The agents next update the variables according to the agreement.
Finally, 
each agent $a \in \aset(v,t)$ with $a.\pout\neq -1$
moves via port $a.\pout$.
If $a.\pout = -1$, agent $a$ does not move
and stays in $v$ in step $t$.

A node does not have any local memory
accessible by the agents. 
Thus, the agents have to coordinate only by
communicating each other. 
No agents are given a priori any global knowledge
such as $m$, $\maxdegree$, and $k$.
 
The values of all variables in agent $a$
constitute the state of $a$.
Let $\calM_{\calA}$ be the set of all possible agent-states
for algorithm $\calA$. ($\calM$ may be an infinite set.)
Algorithm must specify the initial state $\sinit$,
which is common to all agents in $\aset$.
A global state of the network or a \emph{configuration}
is defined as a function $C:\aset \to (\calM,V)$
that specifies the state and the location of each agent $a \in \aset$.
In this paper, we consider only deterministic algorithms.
Thus, if the network is in a configuration $C$ at a step, 
a configuration $C'$ in the next step is uniquely determined.
We denote this configuration $C'$ by $\nextc_{\calA}(C)$.
The execution $\Xi_\calA(C_0)$ of $\calA$
starting from a configuration $C_0$ is defined
as an infinite sequence $C_0,C_1,\dots$ of configurations
such that $C_{t+1} = \nextc_{\calA}(C_t)$ for all $t=0,1,\dots$.

\begin{dfn}[Dispersion Problem]
A configuration $C$ of an algorithm $\calA$
is called \emph{legitimate} if
(i) all agents in $\aset$ are located in different nodes in $C$,
and (ii) no agent changes its location
in execution $\Xi_\calA(C)$. 
We say that $\calA$ solves the dispersion problem
if execution $\Xi_\calA(C_0)$ reaches a legitimate configuration
for any configuration $C_0$ where all agents are in state $\sinit$.
\end{dfn}

We evaluate the \emph{running time} of algorithm $\calA$
as the maximum number of steps until $\Xi_\calA(C_0)$
reaches a legitimate configuration,
where the maximum is taken over $C_0$,
any configuration in which all agents are in state $\sinit$.

We define $m'=\min(m,k\maxdegree/2,\binom{k}{2})$,
which is frequently used in this paper.
We simply write $\aset(v)$ for $\aset(v,t)$
when time step $t$ is clear from the context.

\section{Existing Techniques}
\label{sec:existing}

\begin{algorithm}[t]
\caption{Simple DFS}
\label{al:dfs}
\DontPrintSemicolon
\tcc{
the actions of all agents in $\aset(v)$ for any $v \in V$
}
$a.\pout \gets -1$ for all $a \in \aset(v)$
\tcp*{Initialize $\pout$.}
\If{$\aset(v) \ge 2$}{
\uIf{there is no settled agent in $\aset(v)$}{
 The agent with the smallest identifier in $\aset(v)$
 becomes settled.\;
 $a_v.\last \gets a_v.\pin + 1 \bmod \delta_v$\;
 $a.\pout \gets a_v.\last$
 for all $a \in \aset(v) \setminus \{b\}$\;
}
\Else{
Let $p$ be the common $a.\pin$ for $a \in \aset(v) \setminus \{a_v\}$\;
\uIf{$a.\last \neq p$}{
 $a.\pout \gets p$
 for all $a \in \aset(v) \setminus \{a_v\}$
 \tcp*{Backtrack}
}
\Else{
 $a_v.\last \gets a_v.\last + 1 \bmod \delta_v$\;
 $a.\pout \gets a_v.\last$
 for all $a \in \aset(v) \setminus \{a_v\}$\;
}
}
}
Every agent $a \in \aset(v)$ such that $a.\pout \neq -1$ moves via $a.\pout$\;
\end{algorithm}

\subsection{Simple DFS}
\label{sec:dfs}
If we assume that all $k$ agents are initially located at the same node $\vst \in V$,
the dispersion problem can be solved by
a simple depth-first search (DFS).
The pseudocode of the simple DFS is shown in
Algorithm \ref{al:dfs}.
Each agent $a \in \aset$ maintains a variable
$\mode \in \{\settled,\unsettled\}$.
We say that agent $a$ is \emph{settled}
if $a.\mode = \settled$,
and \emph{unsettled} otherwise.
All agents are initially unsettled.
Once an agent becomes settled at a node,
it never becomes unsettled
nor moves to another node.
We say that a node $v$ is \emph{settled} if
an settled agent is located at $v$;
otherwise, $v$ is unsettled. 
We denote by $a_v$ a settled agent located at $v$
when $v$ is settled.
(No two agents become settled at the same node.)
In this model, no node provides
a local memory accessible by agents. 
However, when a node $v$ is settled,
unsettled agents can use the memory of $a_v$
like the local memory of $v$
since the unsettled agents
and $a_v$ can communicate with each other at $v$.

Unsettled agents always move together in a DFS fashion. 
Each time they find an unsettled node $v$,
the agent with the smallest identifier
becomes settled at $v$ (Line 4).
An settled agent $a_v$ maintains a variable
$a_v.\last \in \{0,1,\dots,\delta_v-1\}$
to remember which port
was used for the last time
by unsettled agents to leave $v$. 
At step 0, all agents are unsettled and located at node $\vst$.
First, after one agent becomes settled at $\vst$,
the other $k-1$ agents move via port 0.
Thereafter, 
the unsettled agents basically migrates between nodes
by the following simple rule:
each time they move from $u$ to $v$,
it moves via port $p_{v}(u)+1 \bmod \delta_v$
(Lines 5, 6, 12, and 13).
Only exception is the case that
node $v$ has already been settled when they move from $u$ to $v$
and $p_{v}(u) \neq a_v.\last$.
At this time, the unsettled agents immediately backtracks
from $v$ to $u$
and this backtracking does not update $a_v.\last$ (Line 10).

If $k > n$,
this well-known DFS traversal guarantees that
the unsettled agents visit all nodes within $4m$ steps,
during which the agents move through each edge at most four times.
Since we assume $k \le n$,
the unsettled agents move through at most
$m'=\min(m,k\maxdegree/2,\binom{k}{2})$ 
different edges.
Thus, all agents become settled \footnotemark{}
within $4m'=O(m')$ steps,
at which point the dispersion is achieved.
The space complexity is $O(\log \maxdegree)$ bits per agent:
each agent maintains only one non-constant variable $\last$,
which requires $\lceil \log \maxdegree \rceil$ bits.
\footnotetext{
Strictly speaking, according to Algorithm \ref{al:dfs},
the last one agent never becomes settled
even if it visits an unsettled node. 
However, this does not matter
because thereafter the last agent never moves 
nor change its state.
}

\subsection{Parallel DFS}
\label{sec:parallel}
Kshemkalyani and Ali \cite{KA19} generalized the above simple DFS
to handle the case that the agents may be initially located
at multiple nodes, using $O(\ell \log (\delta+k))$ bits per agent. (Remember that $\ell$ is the number of nodes at which one or more agents are located in step 0.)
That is, $\ell$ groups of unsettled agents
perform DFS in parallel.
Specifically, in step 0, the agents located at each node $v$
compute $\max\{a.\id \mid a \in A(v,0)\}$
and store it in variable $\groupid$.
Thereafter, the agents use this value as \emph{group identifier}.
Settled agents can distinguish each group by group identifiers,
thus they can maintain $\ell$ slots of individual memory space
such that unsettled agents of each group can dominantly
access one slot of the space.
Since simple DFS requires $O(\log \maxdegree)$ bits,
this implementation of parallel DFS requires 
$O(\ell \log \maxdegree + \ell \log k))=O(\ell \log (\maxdegree+k))$ bits. The running time is still $O(m')$ steps.

\subsection{Zombie Algorithm}
\label{sec:zombie}
We can solve the dispersion
with memory space of $O(\log \maxdegree)$ bits per agent at the cost of increasing the running time to $O(m'k)$,
regardless the initial locations of $k$ agents.
In step 0, all agents compute its group identifier
in the same way as in Section \ref{sec:parallel}.
However, due to the memory constraints,
each settled agent cannot maintain one slot of memory space
for each of $\ell$ groups.
Instead, it provides only one group with memory space of $O(\log \maxdegree)$ bits.
Each settled agent memorizes the largest group identifier
it observes: each time unsettled agent visits a node $v$,
$a_v.\groupid$ is updated to $\max\{a.\groupid \mid a \in \aset(v)\}$.
Thus, at least one group of agents can perform its DFS
by using the memory space of settled agents exclusively.
If an unsettle agent $a$ visits a node $v$ such that 
$a.\groupid < a_v.\groupid$,
$a$ becomes a \emph{zombie},
which always chase the agent whose identifier is equal to
$a_v.\groupid$, which we call the \emph{leader} of the group.
Specifically, 
a zombie $z$ chases a leader by moving via port $a_u.\last$
each time $z$ visits any node $u$.
When $z$ catch up with the leader,
$z$ always follow the leader thereafter until
$z$ becomes settled, which occurs
when $z$ reaches an unsettled node $v$ and
$z$ has the smallest identifier among $A(v)$.
Every agent moves at most $4m'$ times until it observes larger group identifier than any identifier it has observed so far.
Thus, the running time of this algorithm is $O(m'\ell)$ steps.

We call this algorithm the \emph{zombie algorithm}.
The zombie algorithm is essentially the same to
the tree-switching algorithm given by \cite{KA19},
while the zombie algorithm is faster only by a constant factor.

\subsection{Dispersion with  {\boldmath $O(m'\log \ell)$} steps
and {\boldmath $O(\log n)$} bits per agent}
\label{sec:kms19}

Kshemkalyani, Molla, and Sharma \cite{KMS18}
gave the current state of art algorithm
that achieves dispersion
on arbitrary graphs regardless of the initial locations of the $k$ agents.
Their algorithm requires that 
each agent knows \emph{a priori} upper bounds $M$, $K$, and $\Delta$
on $m$, $k$, and $\maxdegree$,
respectively.
Then, the running time is $O(\min(M,K\Delta,K^2) \log \ell)$
steps and the required memory space per agent
is $O(\log M)$ bits.
In particular, if those upper bounds are asymptotically tight,
\ie $N=O(n)$, $M=O(m)$, $K=O(k)$, and $\Delta=O(\maxdegree)$,
those complexities are $O(m' \log \ell)$ steps and
$O(\log n)$ bits, respectively.
In this section, we briefly explain the key idea of their algorithm to clarify how this algorithm requires global knowledge
$M$, $K$, and $\Delta$.
The implementation in the following explanation
is slightly different from the original implementation in
\cite{KMS18}, but the difference is not essential.

Let $T$ be a sufficiently large $O(\min(M,K\Delta,K^2))$ value. Each unsettled agent maintains a timer variable and
counts how many steps have passed since an execution began
modulo $T$. 
Agents switches from stage 1 to stage 2
and from stage 2 to stage 1 in every $T$ steps.
At the switch from stage 1 to stage 2
and from stage 2 to stage 1,
all settled agents reset their group identifiers 
to $-1$, which is smaller than any group identifier
of unsettled agents.

In stage 1, all unsettled agents at each node $v$
first compare their identifiers and
adopts the largest one as their group identifier.
Then, unsettled agents perform DFS in parallel
like the zombie algorithm.
The difference arises when an unsettled agent
finds a settled agent with a larger group identifier.
Then, instead of becoming a zombie,
it stops until the end of stage 1.
Moreover, unlike the zombie algorithm,
an unsettled agent located at a node $v$
becomes settled even when
$|\aset(v)| = 1$.

At the beginning of stage 2, 
one or more unsettled agents
may be located at settled nodes.
Let $G'$ be the subgraph induced by 
all settled nodes in the beginning of stage 2
and $G'(v)$ the component that includes a node $v$ in $G'$.
The goal of stage 2 is to collect all agents in $G'(v)$
and locate them to the same node in $G'(v)$ for each $v \in V$.
Specifically, each unsettled agents located at any node $u$
tries to perform DFS in $G'(u)$ twice.
Since each settled agent $b$ has only $O(\log M)$ bits space,
the memory space of $b$ can be used only by unsettled agents
in the group with the largest identifier that $b$ has observed
in stage 2. An settled agents with smaller
group identifier stops when it observes a larger group identifier.
Then, all unsettled agents in the group with the largest group identifier in $G'(u)$ can perform DFS and visit all nodes in $G(u')$
twice within $T$ steps. 
During the period, they pick up all stopped and unsettled agents in the component and goes back to the node that they are located at in the beginning of stage 2.

Hence, each iteration of stages 1 and 2
decreases the number of nodes with at least one unsettled agent
at least by half.
This yields that all agents become settled 
and the dispersion is achieved
within $O(T \log \ell)=O(\min(M,K\Delta,K^2) \log \ell)$ steps.

The knowledge of the global knowledge
$M$, $K$, and $\Delta$ is inherent to the key idea of this algorithm. Without those upper bounds,
the agents may switch between two stages
before all agents complete a stage so that
the correctness is no longer guaranteed
or the agents may stay in one stage too long
so that the running time is much larger than, for example,
 $\Omega(mn)$.
To the best of our knowledge,
no simple modification removes the requirement
of the knowledge keeping the same running time asymptotically.

\section{Proposed Algorithm}
\label{sec:svl}

In this section, we give an algorithm $\pal$
that solves a dispersion within $O(m' \log \ell)$ steps
and uses $O(\log (k+\maxdegree))$ bits of memory space
per agent. Algorithm $\pal$ requires no global knowledge.

\subsection{Overview}
\label{sec:overview}
Algorithm $\pal$ is based on the zombie algorithm
explained in Section \ref{sec:zombie}
but it has more sophisticated mechanism
to achieve dispersion within $O(m' \log \ell)$ steps.
An agent $a$ maintains its \emph{mode}
on a variable $a.\mode \in \{\leader,\zombie,\board\}$. 
We say that an agent $a$ is
a leader (resp.~a zombie, a settled agent)
when $a.\mode = \leader$ (resp.~$a.\mode = \zombie$, $a.\mode = \board$).
A leader may become a zombie,
and a zombie eventually becomes a settled agent.
However, a zombie never becomes a leader again,
and a settled agent never changes its mode.
An agent $a$ also maintains its \emph{level}
on a variable $a.\level$,
while $a$ stores the identifier of some leader on a variable
$a.\leaderid$.
As long as an agent $l$ is a leader, $l.\id = l.\leaderid$ holds.
These two variables, $\level$ and $\leaderid$,
determine the strength of agent $a$.
We say that an agent $a$ is stronger than an agent $b$
if $a.\level > b.\level$ or
$a.\level = b.\level \land a.\leaderid > b.\leaderid$
holds.

All agents are leaders
at the beginning of an execution of $\pal$,
Each time two or more leaders visit the same node,
the strongest leader \emph{kills} all other leaders,
and the killed leaders become zombies.
Zombies always follow the strongest leader that
it has ever observed. 
A leader always tries to perform a DFS.
Each time a leader visits an unsettled node $v$,
the leader picks an arbitrary one zombie $z \in \aset(v)$
and makes $z$ settled.
If there is no zombie at $v$, 
the leader suspends its DFS until zombies visits $v$.
As in Section \ref{sec:existing},
we denote the settled node located at $v$ 
by $a_v$ if it exists.
We say that a settled agent $s$
is a \emph{minion} of a leader $l$
when $s.\level = l.\level$ and $s.\leaderid = l.\id$ hold.
As in the zombie algorithm, 
a leader $l$ performs a DFS
by using the memory space of its minions,
in particular, by using the values of their $\last$s.
When a leader $l$ visits a settled node $v$
such that $a_v$ is weaker than $l$,
$a_v$ becomes a minion of $l$
by executing  $(a_v.\level,a_v.\leaderid) \gets (l.\level,l.\id)$.
Conversely, 
the leader $l$ becomes a zombie if $a_v$ is stronger than $l$.
At this time, the stronger leader may not be located at node $v$. However, $a_v$ can chase the stronger leader by moving via $a_v.\last$ repeatedly.

Each leader maintains its level like
the famous minimum spanning tree algorithm given by
Gallager, Humblet, and Spira \cite{GHS83}.
Initially, the level of every agent is zero.
A leader increases its level by one each time
it kills another leader with the same level
or meets a zombie with the same level.
We have the following lemma.
\begin{lemma}
\label{lem:maxlevel} 
No agent reaches a level larger than $\log_2 \ell+1$.
\end{lemma}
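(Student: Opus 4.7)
The plan is to mimic the classical level-doubling analysis of Gallager, Humblet, and Spira that the paper cites as the model for the $\level$ rule. For each integer $i \ge 0$ let $n_i$ denote the number of agents whose $\level$ field ever attains the value $i$ during the execution. The lemma is equivalent to $n_i = 0$ for every $i > \log_2 \ell + 1$, so it suffices to prove the two bounds $n_1 \le \ell$ and $n_{i+1} \le n_i / 2$ for every $i \ge 1$; chaining them gives $n_i \le \ell / 2^{i-1}$, and $n_i \ge 1$ then forces $i \le \log_2 \ell + 1$.

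First I would show $n_1 \le \ell$. The only ways an agent can reach level $1$ are (i) being the unique surviving leader at its own initial node after the step-$0$ mergers, which requires that node to host at least two agents, or (ii) being an alive level-$0$ leader that later wins a same-level event. Option (i) contributes at most one agent per initial node, and option (ii) can only promote one of the $\ell$ alive leaders that exist after step $0$; zombies never get promoted. Summing the two contributions gives $n_1 \le \ell$.

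Next I would establish the doubling inequality $n_{i+1} \le n_i / 2$ for $i \ge 1$. Every promotion from level $i$ to level $i+1$ happens in a single event involving a level-$i$ leader (the winner) and a second level-$i$ entity (the loser), the latter being either a same-level leader that is then killed, or a same-level zombie that is then absorbed. Two distinct agents at level $i$ are consumed per such event, and afterwards neither is available at level $i$ again: the winner has become a level-$(i+1)$ leader, while the loser starts to follow the stronger (now level-$(i+1)$) leader and has its $(\level, \leaderid)$ pair updated to the stronger values, so it ceases to be a level-$i$ entity. Hence the multiset of participants of level-$i$ promotion events has size exactly $2 n_{i+1}$, is injectively embedded into the set of ever-level-$i$ agents, and gives $2 n_{i+1} \le n_i$ as required.

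The main obstacle I anticipate is the step that requires every level-$i$ loser to cease being a level-$i$ agent after the event. The status of a killed leader as a zombie is clear, but the behaviour of an already-zombie loser depends on the precise update rules of $\pal$: if a zombie kept a frozen $\level = i$ after being absorbed by a level-$(i+1)$ leader, it could be double-counted by being met later by a second level-$i$ leader. I plan to discharge this by verifying, from the forthcoming pseudocode, that whenever a zombie meets a strictly stronger leader, it synchronously adopts the leader's $(\level, \leaderid)$; this is consistent with the minion rule stated in the overview for settled agents and is the natural way to realise the sentence "zombies always follow the strongest leader they have ever observed''.
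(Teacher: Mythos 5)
Your outline is the same level-doubling argument the paper uses ($n_1\le\ell$ plus a halving from each level to the next, with the same base case via the $\ell_2$ / $\ell-\ell_2$ split), but the step you yourself single out as the main obstacle is discharged by a premise that is false for $\pal$. You propose to verify that whenever a zombie meets a strictly stronger leader it synchronously adopts that leader's $(\level,\leaderid)$. The algorithm does the opposite: Line 3 of Algorithm~\ref{al:svl} only sets $\mode\gets\zombie$, and a zombie never changes $\level$ or $\leaderid$ until it becomes settled (this is stated explicitly in the footnote defining $\amax$, and the paper's own proof of this lemma invokes exactly this frozen-level property). Indeed the frozen level is needed in the opposite direction from how you use it: it is precisely because a level-$i$ zombie stays at level $i$ forever that, for $i\ge 1$, any zombie able to trigger a promotion to level $i+1$ must itself be a former level-$i$ leader, which is what places the ``loser'' inside the set being halved.

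Consequently the double-counting you worry about must be excluded by the movement rules, not by any state update of the loser: once a zombie is co-located with a surviving leader it moves exactly when and where that leader moves (Lines 15 and 16 move every agent at the node except $a_v$, and Line 19 is reached only when no leader survives at the node), so a level-$i$ zombie that has caused one promotion is thereafter pinned to a leader of level at least $i+1$; and if that group later meets a level-$i$ leader, that leader is killed by Line 3 \emph{before} the increment test of Line 7, which is evaluated only for $\amax$ against its own, now higher, level. Hence each level-$i$ loser is consumed at most once. Two smaller points: define $n_i$ as the number of agents that attain level $i$ \emph{as leaders}, since settled agents have $\level$ overwritten to a leader's current level in Lines 10 and 12 and would otherwise falsify $n_{i+1}\le n_i/2$ as literally stated; and note separately that every non-leader's level is copied from some leader, so bounding leader levels suffices for the statement about all agents.
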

\begin{proof}
Let $\ell_2$ be the nodes at which two or more leaders 
are located in the initial configuration,
\ie $\ell_2 = |\{v \in V \mid |\aset(v,0)| \ge 2 \}|$.
In step 0, (i) $\ell_2$ leaders changes their levels from 
0 to 1,
(ii) the levels of $\ell - \ell_2$ leaders remain 0, and 
(iii) other $k-\ell$ leaders become zombies with level 0.
Thus, at most $\ell_2+(\ell-\ell_2)=\ell$ leaders 
reaches level 1. 
Since a zombie never increases its level
until it becomes a settled agent,
for any $i \ge 1$, at most $\ell/2^{i-1}$ leaders reaches
level $i$.
Thus, no agent reaches a level larger than $\log_2 \ell+1$. 
\end{proof}

By definition of minions, a leader $l$ loses all its minions
when $l$ increases its level by one.
As we will see in Section \ref{sec:detailed},
when a leader visits a node $v$,
the leader considers that it has already visited a node $v$
in its DFS if and only if $a_v$ is a minion at that time. 
This means that a leader $l$ restarts a new DFS
each time it increases its level.
In every DFS, a leader $l$ moves at most $4m'$ times. 
However, this does not necessarily mean that $l$ completes its DFS
or increases its level within $4m'$ steps
because $l$ suspends its DFS while no other agent is located at the same node. 
Thus, we require a mechanism to bound the running time 
by $O(m' \log \ell)$ steps.
As we will see in Sections \ref{sec:detailed}
and \ref{sec:analysis},
we achieve this by differentiating the moving speed of agents 
according to various conditions.

\begin{table}[t]
\caption{Variables of $\pal$}
\label{tbl:svl}
\centering
\begin{tabular}{r c l c}
\hline
\multicolumn{1}{c}{variables} & & \multicolumn{1}{c}{description} & initial value \\
\hline
$a.\mode \in \{\leader,\zombie,\board\}$
&:&
the mode of an agent $a$
&  $\leader$
\\
$a.\slot \in \{0,1,2,3\}$
&:& the current timeslot
&  $0$
\\
$a.\level \in \mathbb{N}$ 
&:& the level of an agent $a$
& $0$
\\
$a.\leaderid \in \mathbb{N}$
&:& the identifier of the strongest leader that $a$ observed
& $a.\id$
\\
$a.\last \in \mathbb{N}$
&:&
the pointer to the strongest leader that $a$ has observed
& $0$
\\
$a.\inport \in \mathbb{N} \cup \{-1\}$ 
&:& the last non-negative value of $\pin$
& $-1$
\\
\hline
\end{tabular}
\end{table}

\begin{algorithm}
\caption{$\pal$}
\label{al:svl}
\DontPrintSemicolon
\tcc{
the actions of all agents in $\aset(v)$ for any $v \in V$
}
Let $\amax$ be the unique leader in $\asetmax(v)$ if it exists.
Otherwise, let $\amax = a_v$.\;
$a.\pout \gets -1$ for all $a \in \aset(v)$
\tcp*{Initialize $\pout$.}
$a.\mode \gets \zombie$ for all $a \in \lset(v) \setminus \{\amax\}$ \tcp*{$\amax$ kills the other leaders.} 
 $
 \amax.\inport \gets 
 \begin{cases}
 \amax.\pin & \amax.\mode = \leader \land \amax.\pin \neq -1 \\
 \amax.\inport & \text{otherwise}
 \end{cases}
 $\;
\If{$|\aset(v)| \ge 2$}{
\uIf{$\amax.\mode = \leader$}{
 $
 \amax.\level \gets
 \begin{cases}
 \amax.\level + 1 & \exists z \in \zset(v): z.\level = \amax.\level \\
 \amax.\level & \text{otherwise}
 \end{cases}
 $\;
 \uIf{$\bset(v)=\emptyset$}{
 Choose any zombie
 in $\aset(v) \setminus \{\amax\}$
 and makes it 
 settled.\;
 $a_v.\level \gets \amax.\level$; $a_v.\leaderid \gets \amax.\leaderid$;  $a_v.\last \gets \amax.\inport$
 }
 \uElseIf{
 $a_v$ is not a minion of $\amax$
 }{
 $a_v.\level \gets \amax.\level$; $a_v.\leaderid \gets \amax.\leaderid$; $a_v.\last \gets \amax.\inport$
 }
 \uElseIf{
$a_v$ is a minion of $\amax$ \textbf{\emph{and}}
$\amax.\inport \neq a_v.\last$}{
 $a.\pout \gets \amax.\inport$
 for all $a \in \aset(v) \setminus \{a_v\}$
 \tcp*{Backtrack}
 }
 \ElseIf{$\amax.\slot = 0$}{
 $a.\pout \gets \amax.\inport + 1 \bmod \delta_v$
 for all $a \in \aset(v) \setminus \{a_v\}$\;
 $a_v.\last \gets \amax.\pout$
 }
 }
 \ElseIf{$
\left(
\begin{aligned}
&\max_{z \in \zset(v)} z.\level < a_v.\level \land a_v.\slot \in
 \{2,3\}\\
& \lor
\max_{z \in \zset(v)} z.\level = a_v.\level \land a_v.\slot = 2
\end{aligned}
\right)$
}{
$a.\pout \gets a_v.\last$ for all $a \in \aset(v) \setminus \{a_v\}$\;
}
}
 $a.\slot \gets a.\slot + 1 \bmod 4$ for all $a \in \aset(v)$\;
Every agent $a \in \aset(v)$ such that $a.\pout \neq -1$ moves via $a.\pout$\;
\end{algorithm}

\subsection{Detailed Description}
\label{sec:detailed}
The list of variables and the pseudocode of $\pal$
are given in Table \ref{tbl:svl} and Algorithm \ref{al:svl},
respectively.

First, we introduce some terminologies and notations.
Let $t$ be any time step.
We denote the set of leaders, the set of zombies,
and the set of settled agents at step $t$
by $\lset(t)$, 
$\zset(t)$, and $\bset(t)$, respectively.
In addition, we define
$\lset(v,t) = \aset(v,t) \cap \lset(t)$,
$\zset(v,t) = \aset(v,t) \cap \zset(t)$, and
$\bset(v,t) = \aset(v,t) \cap \bset(t)$
for any node $v \in V$.
Define $\asetmax(v,t)$ as
the set of strongest agents in $\aset(v,t)$.
We omit time step $t$ from those notation,
\eg simply write $\lset(v)$ for $\lset(v,t)$,
when time step $t$ is clear from the context.
A leader $l$ located at a node $v$ is 
called an \emph{active leader} when $|\aset(v)|\ge 2$.
When $|\aset(v)| = 1$,
$l$ is called a \emph{waiting} leader.
A zombie $z$ located at a node $v$
is called a \emph{strong zombie} 
if $z.\level = a_v.\level$.
Otherwise, $z$ is called a \emph{weak zombie}.
In the pseudocode,
we use notation $\amax$ for simplicity.
We define $\amax$ as follows\footnotemark{}:
If there is a leader in $\asetmax(v)$,
$\amax$ denotes the (unique) leader in $\asetmax(v)$;
Otherwise, 
let $\amax = a_v$. 
\footnotetext{
This definition assumes that
there is at most one leader in $\asetmax(v)$
and there is no zombie in $\asetmax(v)$ at any time step.
The former proposition holds because
for any two leaders $l_1$ and $l_2$,
$l_1$ is stronger than $l_2$, or
$l_2$ is stronger than $l_1$.
The latter holds because
(i) a leader becomes a zombie only if
it observes a stronger agent,
(ii) a zombie never becomes stronger,
\ie never change its $\level$ or $\leaderid$,
unless it becomes settled,
and 
(iii) a zombie moves by chasing a stronger leader
or moves together with a stronger leader.
}


As mentioned in Section \ref{sec:overview},
we differentiate the moving speed of agents.
We implement the differentiation with
a variable $\slot \in \{0,1,2,3\}$.
Each agent $a$ counts how many steps
have passed since an execution of $\pal$ began
modulo $4$ and stores it on $a.\slot$ (Line 20).
Thus, all agents always have the same value 
for this variable.
This variable represents timeslots
indicating which kind of agents are allowed to move 
at each step.
For example, an active leader is allowed to move
at slots 0 and 1,
while a strong zombie is allowed to move only at slot 2.

The pseudocode (Algorithm \ref{al:svl})
specifies how the agents located at a node $v$
updates their variables including port $\pout$, 
via which they move to the next destination. 
A leader is immediately killed, \ie becomes a zombie,
if it meets a stronger leader or a stronger settled agent
(Line 3).
A surviving leader increases its level by one each time it meets
a (weaker) leader or zombie with the same level
(Line 7).
A leader $l$ performs a DFS using a variable $\last$
in its minions.
It basically moves in timeslot 0,
while backtracking occurs in timeslot 1.
Since the information of incoming port 
in $l.\pin$ gets lost in timeslots 2 and 3,
the leader $l$ remembers $\pin = p_{v}(u)$
in a variable $l.\inport$
each time $l$ moves from $u$ to $v$ (Line 4).
Specifically, a leader $l$ performs a DFS as follows
until it is killed by another leader:
\begin{itemize}
 \item When $l$ moves from $u$ to $v$ such that
$a_v$ is not a minion of $l$, the leader $l$ considers that 
it visits $v$ for the first time in the current DFS.
Then, $l$ changes $a_v$ to a minion of $l$ (Line 12)
and waits for the next timeslot 0.
 \item When $l$ moves from $u$ to $v$ such that
$v$ is an unsettled node,
$l$ considers that 
it visits $v$ for the first time in the current DFS.
If there is at least one zombie at $v$, 
$l$ changes arbitrary one zombie to a settled agent
and make it a minion of $l$ (Lines 9 and 10).
Otherwise, it suspends the current DFS until
a zombie or a (weaker) leader $a$ visits $v$
(the condition in Line 5 implements this suspension).
Then, $l$ does the same thing for $a$,
\ie make $a$ settled and a minion of $l$.
Thereafter, $l$ waits for timeslot 0.
 \item When $l$ moves from $u$ to $v$
such that $a_v$ is a minion and $a_v.\last = l.\inport$,
the leader $l$
considers that it has just backtracked from $u$ to $v$.
Then, it just waits for the next time slot 0.
 \item 
In timeslot 0, 
$l$ leaves the current node $v$
via port $l.\inport + 1 = p_{v}(u) + 1$
after storing this port on $a_v.\last$
(Lines 16 and 17).
 \item When $l$ moves from $u$ to $v$
such that $a_v$ is a minion and $a_v.\last \neq l.\inport$, 
the leader $l$ considers that 
it has already visited $v$ before in the current DFS.
Thus, it immediately backtracks from $v$ to $u$
without updating $a_v.\last$ (Line 15).
Since a leader makes a non-backtracking move
only in timeslot 0, this backtracking occurs
only in timeslot 1.
\end{itemize}
In the first and second case, 
the leader $l$ simultaneously
substitute $l.\inport$ for $a_v.\last$
to avoid triggering backtracking mistakenly.
Thus, each leader can perform a DFS in the same way as
a simple DFS explained in Section \ref{sec:dfs}
until it is killed and becomes a zombie.

A zombie always tries to follow a leader.
If a zombie $z$ is located at a node $v$ without a leader,
it moves via port $a_v.\last$ in timeslot 2
if $z$ is a strong zombie or some strong strong zombie
is located at the same node $v$ (Line 19).
Otherwise, $z$ moves via port $a_v.\last$
in timeslots 2 and 3 (Line 19).
If $z$ is located at a node with a leader,
it just moves with the leader (Lines 14 and 16)
or becomes settled (Line 9).



\subsection{Correctness and Complexities}
\label{sec:analysis}
In this section, we prove that
an execution $\Xi_{\pal}(C)$ of $\pal$
starting from any configuration $C$
where all agents are in the initial state
achieves dispersion 
within $O(m' \log \ell)$ steps,
and each agent uses $O(\log (k+\maxdegree))$
bits of memory space.

First, we gave an upper bound on the space complexity.
\begin{lemma}
\label{lem:space} 
Each agent uses at most $O(\log (k+\maxdegree))$ bits
of its memory space in an execution of $\pal$.
\end{lemma}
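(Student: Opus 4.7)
The plan is to verify the lemma by going through each of the six variables in Table \ref{tbl:svl} one by one and bounding the number of bits needed to store it, then summing the bounds. Since the agents share no global state and the memory per agent is just the concatenation of these variables, establishing the space bound reduces to bounding each field.

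The two constant-range fields, $\mode$ and $\slot$, clearly require $O(1)$ bits each. The variable $\inport$ and the port-pointer $\last$ both take values in $\{-1, 0, 1, \dots, \maxdegree - 1\}$, so each fits in $O(\log \maxdegree)$ bits. The identifier $\leaderid$ only ever stores some agent's $\id$, and by assumption the common upper bound $\idmax$ satisfies $\idmax = O(\mathrm{poly}(k))$, so $\leaderid$ fits in $O(\log k)$ bits. The one field whose bound is not immediate from the declaration is $\level$, because it is declared as $\mathbb{N}$; here I would invoke Lemma \ref{lem:maxlevel}, which guarantees that no agent ever reaches a level larger than $\log_2 \ell + 1 \le \log_2 k + 1$, so $\level$ fits in $O(\log \log k)$ bits.

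Summing gives a total of $O(1) + O(1) + O(\log \maxdegree) + O(\log \maxdegree) + O(\log k) + O(\log \log k) = O(\log(\maxdegree + k))$ bits, which is the claimed bound.

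The only subtle point, and thus the only place where something more than bookkeeping is needed, is the bound on $\level$: it would be wrong to read the declaration $a.\level \in \mathbb{N}$ naively, and one really must appeal to Lemma \ref{lem:maxlevel} to see that it is bounded in terms of $\ell \le k$. Everything else is a direct consequence of how the variables are defined together with the standing assumption that identifiers are polynomially bounded in $k$.
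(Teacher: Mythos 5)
Your proof is correct and follows essentially the same route as the paper: enumerate the variables, bound each by its range, and invoke Lemma~\ref{lem:maxlevel} for $\level$. The only difference is that you enumerate the fields of Table~\ref{tbl:svl} while the paper also counts $\pin$ and $\pout$ among the non-constant variables; since those are just port numbers bounded by $O(\log\maxdegree)$ like $\last$ and $\inport$, this changes nothing.
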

\begin{proof}
Each agent $a$ maintains six non-constant variables:
$a.\level$, $a.\leaderid$, $a.\last$, $a.\inport$, 
$a.\pin$, and $a.\pout$.
The first variable $a.\level$ uses only $O(\log \log \ell)$ bits
by Lemma \ref{lem:maxlevel}.
Each of the other five variables just stores 
the identifier of some agent
or the port numbers of some node.
Thus, they uses only $O(\log (k + \maxdegree))$ bits.
\end{proof}

In an execution of $\pal$,
agents located at a node $v$
leaves $v$ only when there are two or more agents at $v$. 
Therefore, we have the following lemma.

\begin{lemma}
\label{lem:once}
In an execution of $\pal$,
once all agents are located at different nodes,
no agent leaves the current location thereafter.
\end{lemma}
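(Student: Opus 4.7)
The plan is to prove this by straightforward inspection of the pseudocode of Algorithm \ref{al:svl}, observing that the only place $\pout$ is ever set to a non-negative value is inside the conditional block guarded by $|\aset(v)| \ge 2$ on Line 5.

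More concretely, I would proceed as follows. First, I would note that at the start of every step, Line 2 initializes $a.\pout \gets -1$ for every $a \in \aset(v)$, and that the final move in Line 21 is conditioned on $a.\pout \neq -1$. So an agent moves in a step if and only if its $\pout$ is assigned some non-negative value somewhere between Lines 2 and 21. Then I would enumerate the assignments to $\pout$: they appear only on Lines 13, 15, 16 (as part of the DFS leader logic) and Line 19 (for settled-node following by zombies). Each of these assignments lies inside the top-level \textbf{if} block starting on Line 5 whose guard is $|\aset(v)| \ge 2$.

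Now suppose at some step $t$ the configuration satisfies $|\aset(v,t)| \le 1$ for every node $v \in V$, i.e.\ the agents are located at pairwise distinct nodes. Then for every node $v$ and every $a \in \aset(v,t)$, the guard on Line 5 fails, so $a.\pout$ remains equal to its initialized value $-1$, and by Line 21 no agent moves during step $t$. Consequently the location of every agent is unchanged, and the same hypothesis $|\aset(v,t+1)| \le 1$ holds at step $t+1$. An immediate induction on $t' \ge t$ yields that no agent ever leaves its node at any subsequent step.

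There is no real obstacle here; the lemma is essentially a syntactic observation about the pseudocode. The only thing worth being careful about is making sure no hidden side-effect outside the \textbf{if} block of Line 5 could assign $\pout$ to a non-negative value (e.g.\ via Line 4 or Line 20), but $\inport$ and $\slot$ are distinct variables from $\pout$, so this is immediate.
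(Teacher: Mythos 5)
Your proof is correct and matches the paper's reasoning: the paper justifies this lemma with the single observation that agents leave a node $v$ only when $|\aset(v)| \ge 2$, which is exactly the syntactic inspection of the guard on Line 5 that you carry out in more detail. Your added induction step is a fine elaboration but not a different approach.
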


Thus, it suffices to show that
all agents will be located at different nodes
within $O(m' \log n)$ steps
in an execution of $\pal$.
To prove this,
we introduce some terminologies and notations.
Let $t$ be any time step
and let $a$ be any agent in $\aset(v,t)$ for any $v$.
Then, define the \emph{virtual level} of agent $a$
at step $t$
as $\vlevel(a,t) = \max_{b \in \aset(v)} b.\level$.
Define $\lmin(t)$ be the minimum virtual level 
of all active leaders and all zombies in $\aset$
at step $t$.
For simplicity, we define $\lmin(t) = \infty$
if there is no active leader and no zombie
in $\aset$ at step $t$.
Again, we simply write $\vlevel(a)$ and $\lmin$ for
$\vlevel(a,t)$ and $\lmin(t)$, respectively,
when time step $t$ is clear from the context.
Note that for any agent $a$,
the virtual level and the level of $a$
differ if and only if $a$ is a weak zombie.

By definition of $\pal$, we have the following lemma.
\begin{lemma}
\label{lem:nondecreasing} 
In an execution of $\pal$,
for any agent $a$,
the virtual level of $a$ never decreases.
\end{lemma}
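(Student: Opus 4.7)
My plan is to proceed in three stages: (i) a monotonicity claim on individual agent levels; (ii) a case analysis reducing the lemma to one difficult sub-case; and (iii) an auxiliary invariant about $a_v.\last$ that dispatches that sub-case. For (i), I would argue that each agent's $\level$ field is monotonically non-decreasing across steps: leaders only increment it (Line 7), zombies' levels are never assigned (Line 3 modifies only $\mode$), and for settled agents the only updates are in Lines 10 and 12, which overwrite $a_v.\level$ with $\amax.\level$; the latter is at least $a_v.\level$ because $\amax$ is a strongest agent at $v$. I would also record the structural observation, immediate from inspection of the pseudocode, that whenever any non-settled agent leaves $v$ in a step, every agent in $\aset(v,t) \setminus \{a_v\}$ leaves together and by the same port: each of the $\pout$ assignments in Lines 14, 16, and 19 acts uniformly on the whole set $\aset(v,t) \setminus \{a_v\}$.

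For (ii), write $v$ for $a$'s step-$t$ location and $L = \vlevel(a,t)$. If $a$ remains at $v$ and nobody leaves $v$, then $\aset(v,t+1) \supseteq \aset(v,t)$ and monotonicity gives $\vlevel(a,t+1) \ge L$. If $a$ remains while others leave, the structural fact forces $a = a_v$, and Lines 10 and 12 (combined with any Line 7 increment of $\amax.\level$) guarantee $a_v.\level \ge L$ at step $t+1$. If $a$ moves to some $u$ together with a companion $b \neq a_v$ of level $L$, then $b$ also reaches $u$ and by (i) still has level at least $L$; this covers every situation in which a non-settled agent of level $L$ co-exists with $a$ at $v$, since a weaker level-$L$ leader is reclassified by Line 3 as a strong zombie and subsequently co-moves with $a$ via Line 19 in slot $2$. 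The only remaining sub-case is: $\amax = a_v$, no agent other than $a_v$ at $v$ has level as high as $L$, and the step-$t$ movement is the weak-zombie migration of Lines 18--19 through port $a_v.\last$ to a neighbor $u$, with every mover having level strictly less than $L$.

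The main obstacle is this remaining sub-case, which I would dispatch via the auxiliary invariant: whenever Line 19 is about to fire at $v$, the endpoint $u$ of $a_v.\last$ contains at step $t+1$ at least one agent of level at least $a_v.\level$. I would prove this by tracing back to the most recent step $t^* \le t$ at which $a_v.\last$ was assigned its current value. If $t^*$ fired Line 17, then a leader $l^*$ with $l^*.\level$ equal to the present $a_v.\level$ departed to $u$ in that step and witnesses the bound at $t^* + 1$; from then on $u$ permanently retains a witness, because if $l^*$ (or any later agent of level at least $l^*.\level$) ever leaves $u$ it must do so with $|\aset(u)| \ge 2$, which by Lines 9--10 or 11--12 installs $a_u$ at level at least $l^*.\level$, a value that by (i) is preserved forever while $a_u$ stays immobile. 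If instead $t^*$ fired Line 10 or 12, then the updating leader sits at $v$ with $\amax.\inport = a_v.\last$, which prevents Line 13--14 from firing and thereby keeps that leader at $v$ until it performs a Line 15--17 forward step (overwriting $a_v.\last$ and reducing to the previous case); during this intermediate interval one has $\amax(v) \neq a_v$, so Line 19 cannot fire. Verifying this final ``blocking'' claim --- that the window during which $a_v.\last$ could point to a weakly-supported node never coincides with an actual firing of Line 19 --- is the delicate core of the argument.
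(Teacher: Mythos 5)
Your proof is correct, but it bears little resemblance to the paper's for the simple reason that the paper offers no proof at all: Lemma~\ref{lem:nondecreasing} is introduced with the single phrase ``By definition of $\pal$'' and asserted without argument. Your decomposition shows why that dismissal is too quick. The first three of your cases really are immediate from the code --- individual levels are monotone, all agents in $\aset(v)\setminus\{a_v\}$ move together through a common port, and a leader may leave a node only after making $a_v$ its minion with its own level --- but your fourth case (weak zombies dispatched through $a_v.\last$ in timeslots 2--3, leaving behind the settled agent that is the sole realizer of their virtual level) genuinely requires the auxiliary invariant you formulate: the endpoint of $a_v.\last$ hosts, one step later, an agent of level at least $a_v.\level$. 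Your proof of that invariant is sound --- if $\last$ was last written by Line 17 a departing leader seeds the target node with a witness that persists because a leader can only vacate a node whose settled agent is already its minion, and if it was last written by Lines 10/12 a leader is still present at $v$ (it cannot backtrack, since $\amax.\inport = a_v.\last$ there) and its presence blocks Line 19 from firing. The only phrasing worth tightening is the claim that a departing witness ``installs'' $a_u$ via Lines 9--12 at the moment of leaving; in fact the installation happened at an earlier step, and the departure steps (Lines 14 and 16) merely presuppose it, but the conclusion is the same. Notably, your invariant is essentially the unstated fact the paper leans on again in the proofs of Lemmas~\ref{lem:weak_zombie} and~\ref{lem:lmin}, where it asserts ``by definition of $\pal$'' the existence of a path $v_1,\dots,v_s$ with $p_{v_i}(v_{i+1})=a_{v_i}.\last$ terminating at a leader; so your stage (iii) supplies a missing piece of the paper's analysis rather than duplicating it.
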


\begin{lemma}
\label{lem:leader_dfs}
In an execution of $\pal$,
from any time step,
each active leader $l$ increases its level at least by one
or becomes a waiting leader, a zombie, or a settled agent
within $O(m')$ steps.
\end{lemma}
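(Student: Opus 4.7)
I would argue by contradiction, assuming that some active leader $l$ keeps its mode and its value of $l.\level$ unchanged throughout an interval of length $c \cdot m'$ steps, for a suitably large constant $c$. From this assumption I would first extract the structural consequences. By Line~3, $l$ never encounters an agent strictly stronger than itself; by Line~7, $l$ never encounters a zombie (nor a leader that is demoted to a zombie in the same step) whose level equals $l.\level$; and by the ``active'' requirement, at least one other agent shares $l$'s node at every step of the interval. In particular, whenever $l$ crosses into an unsettled node, some zombie must accompany $l$ (otherwise $l$ becomes a waiting leader), and that zombie is immediately promoted to a minion of $l$ via Lines~8--10.

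Next I would bound $l$'s DFS speed. Since $l$ is active at every step, each $4$-step timeslot cycle admits at least one DFS action: a forward move in slot~$0$ (Lines~15--17) or a backtrack in another slot (Lines~13--14). Any such move crosses an edge incident to one of $l$'s minions, and exactly as in the simple-DFS analysis of Section~\ref{sec:dfs}, each such edge is traversed at most four times during the whole DFS. The explored subgraph sits inside $G$ on at most $k-1$ vertices, so it contains at most $m'$ edges; hence at most $4m'$ DFS moves occur, and under the timeslot slowdown this fits in $O(m')$ steps. The delicate point is to verify that the ``$4m'$ moves'' bound from Section~\ref{sec:dfs} still applies when $l$ absorbs an already-settled non-minion agent via Line~12 in the middle of its DFS; I would handle this as a small separate observation, arguing that such a takeover merely ``prefills'' a subtree in the minion structure and cannot introduce additional edge traversals.

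The promised contradiction now follows. Under the standing assumption, each time $l$ arrives at a non-minion node $u$, either $u$ is unsettled and a zombie travelling with $l$ is settled as a minion (Lines~9--10), or $a_u$ is strictly weaker than $l$ and is recruited as a minion via Line~12; in both cases the minion set strictly grows and none of the outlawed events (becoming waiting, becoming a zombie, changing level) occurs. Once the $4m'$ DFS moves are exhausted, no unexplored neighbour remains, so every edge leaving a minion leads to a minion. Because $G$ is connected, this forces the minions to cover every node of $G$, giving at least $n$ minions; but the minion set excludes $l$, so its size is at most $k-1 \le n-1$, a contradiction. I expect the minion-takeover bookkeeping to be the main obstacle: the $4m'$ bound is stated for the simple DFS where every settled agent was settled by the current DFS, and a careful amortisation is needed to carry it through the takeovers that can occur in $\pal$.
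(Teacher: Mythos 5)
Your proposal is correct and follows essentially the same route as the paper's (much terser, two-sentence) proof: an always-active leader makes DFS progress every $O(1)$ steps, the DFS comprises at most $4m'$ moves by the Section~\ref{sec:dfs} accounting, and if none of the listed events interrupts it, the leader must run out of agents to settle (since minions number at most $k-1\le n-1$) and hence become waiting. The extra care you take with Line~12 takeovers and the connectivity-based counting is detail the paper omits, but it is the same argument.
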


\begin{proof}
The active leader $l$ performs a DFS correctly
unless it meets a leader/zombie with the same level
or finds a stronger agent than $l$.
Thus,
it becomes a waiting leader
if such event does not occur for sufficiently large $O(m')$ steps,
Otherwise, it increases its level by one or becomes a zombie or a settled agent within $O(m')$ steps. 
\end{proof}

\begin{lemma}
\label{lem:weak_zombie} 
In an execution of $\pal$,
from any time step,
each weak zombie $z$ increases its virtual level or
catches up a leader or a strong zombie
within $O(k)$ steps.
\end{lemma}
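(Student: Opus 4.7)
The plan is to show that a weak zombie $z$ effectively chases the active leader at its virtual level along a chain of $\last$-pointers, where the chain has length at most $k$ and closes at rate $\Omega(1)$ per four steps.

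First I would bound $z$'s movement rate. Let $v$ be $z$'s current node. Because $z$ is weak, $z.\level < a_v.\level$, and so long as no strong zombie is co-located with $z$ (otherwise the lemma's conclusion already holds), every $z' \in \zset(v)$ satisfies $z'.\level < a_v.\level$. The first disjunct $\max_{z' \in \zset(v)} z'.\level < a_v.\level \land a_v.\slot \in \{2,3\}$ in Line~18 of Algorithm~\ref{al:svl} therefore triggers $z.\pout \gets a_v.\last$ in both slot~2 and slot~3 of every four-step cycle, so $z$ executes at least two hops per four steps.

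Second I would analyze the $\last$-chain. Let $L = a_v.\level = \vlevel(z)$ and let $l$ be the leader with $l.\id = a_v.\leaderid$. Starting from $v_0 = v$ and iteratively hopping through the port $a_{v_i}.\last$ generates a sequence of nodes in $l$'s DFS tree. As long as each $a_{v_i}$ is a minion of $l$, the $\last$-pointers direct the sequence along the current DFS spine toward $l$'s position; in particular the $v_i$ are distinct, so the chain has length at most $k$. The chain terminates at some $v_j$ that either coincides with $l$'s current node, hosts a strong zombie, or has a settled agent with $\level > L$ --- the last occurring when a stronger leader has overwritten that minion via Line~12. Each case realizes one of the three events in the lemma.

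Third I would compare movement rates. By Line~16, an active leader moves forward only in slot~0, so $l$ extends its DFS spine by at most one edge per four steps; backtracks (Line~15) do not lengthen the chain. Since $z$ traverses one edge of the chain per two steps, the chain's endpoint approaches $z$ at a net rate of at least one per four steps, yielding a total of $O(k)$ steps before $z$ reaches the endpoint. The hard part, I expect, is handling the dynamic rewriting of the chain during $z$'s pursuit: $l$ may be killed and become a zombie; another, stronger leader may arrive and overwrite the level or leader-id of some $a_{v_i}$ on the chain; or $a_{v_j}.\last$ may briefly point into an already-completed subtree of $l$'s DFS, creating a temporary cycle near a leaf of the DFS tree. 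Each of these needs to be shown to either deliver $z$ immediately to a terminating event (using Lemma~\ref{lem:nondecreasing} to rule out a drop in virtual level) or to preserve the $\Omega(1)$-per-four-steps closing rate along whatever chain $z$ ends up following.
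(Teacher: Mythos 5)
Your proposal follows essentially the same route as the paper's proof: define the chain of $\last$-pointers of length at most $k$ from $z$'s node to a leader, truncate it at the first node hosting a leader, a strong zombie, or an agent of strictly higher (virtual) level, and then compare speeds --- the weak zombie advances two edges per four steps (slots 2 and 3) while the chain's far end extends by at most one edge per four steps, so the chain shrinks by at least one edge per four-step cycle and closes within $O(k)$ steps. The ``hard part'' you flag (the chain being rewritten by stronger leaders or by $l$ being killed) is exactly what the paper handles, rather tersely, by choosing the truncation point to absorb those events and invoking Lemma~\ref{lem:nondecreasing}.
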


\begin{proof}
Suppose that now $z$ is located at a node $v$
and has a virtual level $i$.
By definition of $\pal$,
we must have a path $w=v_1,v_2,\dots,v_s$
such that $s \le k$, $v = v_1$,
$p_{v_i}(v_{i+1}) = a_{v_i}.\last$
for $i=1,2,\dots,s-1$,
and a leader with at least level $i$ is located at $v_s$.
Let $j$ be the smallest integer such that
a leader, a strong zombie, or
an agent with virtual level $i' \ge i+1$
is located at $v_j$.
Then, sub-path $w'$ of $w$ is defined as
$w'=v_1,v_2,\dots,v_j$. 
This sub-path $w'$ changes as time passes.
Zombie $z$ moves forward in $w'$
two times in every four steps 
because it moves in timeslots 2 and 3.
However, a leader moves only once 
or moves and backtracks in every four steps,
while a strong zombie moves only once 
in every four steps.
Therefore, by Lemma \ref{lem:nondecreasing},
the length of the $w'$ 
decreases at least by one 
in every four steps,
from which the lemma follows.
\end{proof}

\begin{lemma}
\label{lem:lmin} 
In an execution of $\Xi=\Xi_{\pal}(C_0)=C_0,C_1,\dots$,
from any time step such that $\lmin < \infty$,
$\lmin$ increases at least by one
in $O(m')$ steps.
\end{lemma}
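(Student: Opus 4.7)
The plan is to fix a time $t_0$ with $\lmin(t_0)=L<\infty$, define the pool $S(t)=\{a\in\aset : a \text{ is an active leader or zombie at step } t,\ \vlevel(a,t)=L\}$, and prove that within $O(m')$ steps $S$ becomes empty, so $\lmin>L$. As a preliminary I would verify that $\lmin$ is non-decreasing once it reaches $L$: members already in the pool cannot decrease their virtual level by Lemma~\ref{lem:nondecreasing}, while any new member — created either when a leader at level $\ge L$ is killed (so the resulting zombie lives at a node whose max level is at least the killer's level, hence $\ge L$) or when a waiting leader is joined by another agent (the joiner, by the footnote's property that no zombie lies in $\asetmax(v)$, must be at least as strong as the waiting leader in level) — always enters with virtual level at least $L$.

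For each active leader at level $L$ in $S(t_0)$ I invoke Lemma~\ref{lem:leader_dfs}: within $O(m')$ steps such a leader either transitions to waiting/settled or raises its level to $L+1$ (all of which remove it from $S$), or else is killed and becomes a zombie. In the killed case I case-split on the killer at the meeting node $v$: if the killer is a leader or settled agent with level $>L$, the virtual level jumps above $L$ immediately; if the killer is a leader at level $L$, Line~7 of Algorithm~\ref{al:svl} fires because the freshly killed leader is now a zombie with $z.\level=\amax.\level=L$, promoting $\amax$ to level $L+1$ and hence also the new zombie's virtual level. The only residual case is that the killer is a settled agent $a_v$ at level $L$ with larger $\leaderid$; then the new zombie is a \emph{strong} zombie with virtual level still $L$, which starts chasing the $\last$-trail laid down by the leader originally responsible for $a_v$.

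For zombies in $S(t_0)$ I handle weak and strong zombies separately. A weak zombie is handled directly by Lemma~\ref{lem:weak_zombie}: in $O(k)=O(m')$ steps it either raises its virtual level (exit) or catches a leader or strong zombie; once attached to a leader, applying Lemma~\ref{lem:leader_dfs} to that leader yields a further $O(m')$-bounded progress step whose resolution (using the same case-split on the killer as above) either removes the agents from $S$ or again produces a strong zombie at virtual level $L$. For a strong zombie $z$ at level $L$ — including those produced by the residual case in the previous paragraph — I would prove an analogue of Lemma~\ref{lem:weak_zombie}: $z$ traverses one edge along $a_v.\last$ in every four steps (slot~2), while any leader at level $L$ ahead of it makes at most one forward edge-traversal in every four steps and is forced to backtrack to complete a DFS that spans at most $4m'$ edges; thus the trail-distance between $z$ and the DFS frontier is non-increasing and strictly shrinks on every backtrack, so within $O(m')$ further steps $z$ either catches a leader (triggering a virtual-level jump or a Line~7 promotion) or reaches a node whose $a_v$ has been overwritten to a higher level, raising $\vlevel(z)$ above $L$.

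The main obstacle I expect is precisely this strong-zombie catch-up bound, because strong zombies and leaders move at the same nominal rate and so progress cannot be charged to a raw frequency gap as in Lemma~\ref{lem:weak_zombie}. The required ingredient is to combine the $O(m')$ edge-traversal bound on the DFS of a single leader with the observation that every backtrack edge effectively pulls the leader back toward the trail-pointer used by $z$. Granting this, every agent in $S(t_0)$ leaves $S$ in time $O(m')$ — either directly via Lemma~\ref{lem:leader_dfs} or \ref{lem:weak_zombie}, or after one conversion into a strong zombie followed by the catch-up argument — and no new entries drag $\lmin$ back down by the preliminary observation, which together yield the claimed $O(m')$ bound.
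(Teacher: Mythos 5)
You have correctly isolated the crux: a strong zombie at level $L$ chasing an active leader at the same level $L$ moves at the same nominal speed as its quarry, so no frequency-gap argument in the style of Lemma~\ref{lem:weak_zombie} applies. But this is exactly the step you leave unproven (``Granting this\dots''), and the potential argument you sketch --- that the trail-distance to the DFS frontier is non-increasing and strictly shrinks on every backtrack --- is neither established nor sufficient as stated: non-increase alone gives no time bound, the number of rounds in which the distance merely stays constant is not controlled by the number of backtracks, and the $\last$-pointer path is momentarily inconsistent right after a backtrack (the pointer at the node the leader returns to still points forward until the next timeslot~0), so even the monotonicity claim needs care. Since the entire content of the lemma is this $O(m')$ bound, the proposal as written does not prove it.

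The paper avoids the same-speed pursuit altogether by sequencing the argument instead of discharging each pool member independently. Starting from a configuration with $\lmin=i$, it first shows that the set $\aset_{\gamma}$ of weak zombies of virtual level $i$ not co-located with a leader or strong zombie admits no new entries and empties within $O(k)$ steps by Lemma~\ref{lem:weak_zombie}; once $\aset_{\gamma}=\emptyset$, no waiting leader of level at most $i$ can be reactivated without leveling up, so Lemma~\ref{lem:leader_dfs} yields, after $O(m')$ further steps, a suffix $\Xi''$ in which \emph{no} active leader of level $i$ exists at all. Only then are strong zombies of level $i$ analyzed, and by that point the pursuit problem has vanished: the $\last$-pointer path from such a zombie leads to a leader of level at least $i$, and either no settled agent on the path changes its $\last$ for $4s\le 4k$ steps --- so the zombie walks to the end and its virtual level jumps to $i+1$ upon meeting (and activating) that leader --- or some $a_{v_j}$ on the path is rewritten, which in $\Xi''$ can only be done by an active leader of level at least $i+1$, again lifting the virtual level. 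This ordering is the idea your proposal is missing; to salvage your structure you would need to first rule out active leaders at level $L$ (which your use of Lemma~\ref{lem:leader_dfs} nearly accomplishes, provided you also argue that waiting leaders at level $L$ cannot be reactivated at level $L$) before touching the strong zombies, rather than proving a head-on catch-up bound.
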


\begin{proof}
Let $C_t$ be any configuration where $\lmin = i$ holds. 
It suffices to show that the suffix $\Xi'$ of $\Xi$ after $C_t$,
\ie $\Xi'=C_t,C_{t+1},\dots$ reaches a configuration where 
$\lmin \ge i+1$ holds within $O(m')$ steps.
Let $\aset_{\gamma}$ be the set of all weak zombies
with virtual level $i$ that are not located
at a node with a leader or a strong zombie.
No agent in $\aset \setminus \aset_{\gamma}$
becomes an agent in $\aset_{\gamma}$ in $\Xi'$
because all leaders and all strong zombies
must have levels no less than $i$, 
and thus they become weak zombies only after
their virtual levels reach $i+1$.
Therefore, by Lemma \ref{lem:weak_zombie},
$\aset_{\gamma}=\emptyset$ holds within $O(k')$ steps in $\Xi'$,
and $\aset_{\gamma}=\emptyset$ always hold thereafter.
If there is no agent in $\aset_{\gamma}$,
no waiting leader with at most level $i$
goes back to an active leader
without increasing its level.
Thus, by Lemma \ref{lem:leader_dfs},
within $O(m')$ steps,
$\Xi$ reaches a configuration $C_{t'}$
from which there is always no active leader with level $i$.

Let $\Xi''$ be the suffix $C_{t'},C_{t'+1},\dots$ of $\Xi'$.
By definition, in $\Xi''$, 
there is no active leader with level at most $i$.
In addition,
the virtual level of a zombie is $i$
only if it is located at a node with
a strong zombie with level $i$.
Thus, it suffices to show that 
every strong zombie $z$ with level $i$
increases its virtual level at least by one
within $O(m')$ steps.
Suppose that $z$ is located at a node $v$. 
By definition of $\pal$,
we must have a path $w=v_1,v_2,\dots,v_s$
such that $s \le k$, $v = v_1$,
$p_{v_i}(v_{i+1}) = a_{v_i}.\last$
for $i=1,2,\dots,s-1$,
and a leader $l$ is located at $v_s$.
In $\Xi''$, there is no active leader with level $i$.
Thus, if $l$ is active, its level is at least $i+1$.
Even if $l$ is waiting, its level is at least $i$,
and the virtual level of $a_{v_s}$
immediately becomes at least $i+1$
after an active leader or a zombie visits $v_s$. 
Therefore, if none of
the agents $a_{v_1},a_{v_2},\dots,a_{v_{s-1}}$
changes the value of its $\last$
during the next $4s \le 4k$ steps, 
the virtual level of $z$ reaches at least $i+1$.
Thus, suppose that some $a_{v_j}$\ ($1 \le j \le s-1$)
changes the value of its $\last$ during the $4s$ steps.
This yields that an active leader must visit $v_j$
during the period.
Moreover, in $\Xi''$, every active leader has level at least
$i+1$.
Thus, the virtual level of agent $a_{v_j}$ must reach
at least $i+1$ at that time.
Therefore, in any case, 
zombie $z$ increases its virtual level at least by one 
within $O(k)$ steps in $\Xi''$.
\end{proof}


%


\begin{theorem}
Algorithm $\pal$
solves a dispersion problem
within $O(m' \log \ell)$ steps.
It uses $O(\log (k+\maxdegree))$ bits of memory space
per agent. 
\end{theorem}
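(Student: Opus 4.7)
The space bound is immediate from Lemma \ref{lem:space}, so the real work is the running time. The plan is to chain Lemmas \ref{lem:maxlevel} and \ref{lem:lmin} to argue that within $O(m'\log \ell)$ steps the system reaches a configuration in which no active leader and no zombie exists, and then to read off dispersion from the structural consequences of that fact together with Lemma \ref{lem:once}.

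First I would handle the running time. Consider any configuration in which at least one active leader or zombie exists, so that $\lmin < \infty$. By Lemma \ref{lem:lmin}, after $O(m')$ further steps, either $\lmin$ has strictly increased or it has jumped to $\infty$. Iterating this argument and invoking Lemma \ref{lem:maxlevel}, which caps every agent's level (and hence $\lmin$) at $\log_2 \ell + 1$, we conclude that after at most $(\log_2 \ell + 2)\cdot O(m') = O(m'\log \ell)$ steps we must reach a configuration $C^{\star}$ in which $\lmin = \infty$, i.e.\ $\lset_{\mathrm{active}} \cup \zset = \emptyset$ where $\lset_{\mathrm{active}}$ is the set of active leaders.

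Second I would argue that such a $C^\star$ is already a legitimate configuration. Every agent is in one of three modes, so any agent that is not settled in $C^\star$ must be a waiting leader (since there are no zombies and no active leaders). A waiting leader is by definition alone at its node. Settled agents are pairwise at distinct nodes, because the algorithm makes an agent settled only at an unsettled node (Lines~8--10 of Algorithm~\ref{al:svl}), and a settled agent never moves or changes mode. Finally, a waiting leader at $v$ forces $|\aset(v)|=1$, so no settled agent can share its node either. Hence all $k$ agents occupy pairwise distinct nodes in $C^\star$. By Lemma \ref{lem:once} no agent moves thereafter, so $C^{\star}$ is legitimate by the definition of a legitimate configuration.

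The main subtlety — and essentially the whole content of the proof — has already been packaged into Lemma \ref{lem:lmin}; the only obstacle in writing up the theorem itself is the careful bookkeeping that turns $\lmin = \infty$ into ``all $k$ agents occupy distinct nodes''. For this one needs to keep straight that (i) an agent's mode evolves only through the transitions leader $\to$ zombie $\to$ settled, and (ii) the algorithm creates a new settled agent only at a previously unsettled node, with exactly one per node. Once these two observations are explicit, the above paragraph closes the argument and the theorem follows.
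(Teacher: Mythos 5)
Your proposal is correct and follows essentially the same route as the paper: chain Lemma \ref{lem:maxlevel} with Lemma \ref{lem:lmin} to force $\lmin=\infty$ within $O(m'\log\ell)$ steps, observe that this leaves only waiting leaders and settled agents (hence pairwise distinct locations), and close with Lemmas \ref{lem:once} and \ref{lem:space}. The extra bookkeeping you supply for turning $\lmin=\infty$ into ``all agents at distinct nodes'' is detail the paper leaves implicit, not a different argument.
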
 
\begin{proof}
By Lemmas \ref{lem:maxlevel} and \ref{lem:lmin},
$\lmin=\infty$ holds within $O(m' \log \ell)$ steps.
Since $\lmin=\infty$ yields that
no active leader and no zombie exists,
thus all agents are located at different nodes at that time.
Therefore, the theorem immediately follows
from Lemmas \ref{lem:space} and \ref{lem:once}.
\end{proof}

\section{Conclusion}
\label{sec:conclusion}
In this paper, we presented a both time and space efficient algorithm for the dispersion problem. This algorithm does not require any global knowledge. However, we require that all agents compute and move synchronously.
The proposed algorithm inherently requires the synchronous assumption:
active leaders, strong zombies, and weak zombies move in different speeds. We leave open whether both time and space efficient algorithm can be designed for the asynchronous setting.

\bibliographystyle{plain} 
\bibliography{biblio} 


\end{document}